\newcommand{\ket}[1]{\ensuremath{\left|\left. #1\right>\right.}}
\newcommand{\braket}[2]{\ensuremath{\left<\left.#1\left| #2\right.\right.\right>}}
\newcommand{\denmat}[1]{|#1\rangle\langle #1|}
\newtheorem{theorem}{Theorem}
\newtheorem{definition}{Definition}
\newtheorem{lemma}{Lemma}
\newtheorem{prop}{Proposition}
\begin{document}

\title{Mixture of entangled pure states with maximally mixed one-qudit reduced density matrices}
\author{ Marvin M. Flores}
\author{Eric A. Galapon}
\email{eagalapon@up.edu.ph}
\affiliation{Theoretical Physics Group, National Institute of Physics, University of the Philippines, Diliman Quezon City, 1101 Philippines}
\date{\today}
\begin{abstract}
We study the conditions when mixtures of entangled pure states with maximally mixed one-qudit reduced density matrices remain entangled. We found that the resulting mixed state remains entangled when the number of entangled pure states to be mixed is less than or equal to the dimension of the pure states. For the latter case of mixing a number of pure states equal to their dimension, we found that the mixed state is entangled provided that the entangled pure states to be mixed are not equally weighted. We also found that one can restrict the set of pure states that one can mix from in order to ensure that the resulting mixed state is genuinely entangled.
\end{abstract}
\pacs{03.67.Mn}
\maketitle
\section{Introduction}

One of the outstanding problems in the theory of quantum entanglement, which is of paramount importance when using entanglement as a resource for quantum information processing is its detection \cite{detection}. In other words, given an arbitrary quantum state, the problem of detection asks whether the state is entangled or not. The difficulty of the problem of entanglement detection is reflected in the abundance of separability criteria in literature \cite{detection, horodecki}, some involving extremizations and working only for some special cases. The problem is even more pronounced in the case of mixed states, where the typical recourse would be to construct convex roof extensions of existing measures for pure states, something which is hard to compute in general \cite{quantifying,entropy}. Hence, it would be very helpful if we can come up with classes of mixed states which are entangled, thus saving us from having to apply an entanglement detection scheme should a given state belong to these classes.

In this paper, we introduce a class of entangled mixed states by mixing distinct $N$-partite entangled pure states having equal dimensions $d$ and having maximally mixed one-qudit reduced density matrices. We are motivated by the fact that although a convex combination of separable states is again separable, the mixture of entangled pure states is not necessarily entangled as can be easily seen by mixing two distinct maximally entangled Bell states of equal weights \cite{popescu}. Our choice of mixing entangled pure states where all the one-qudit reduced density matrices are maximally mixed is justified since we utilized an entropic-based separability criterion to determine whether the resulting mixed states are entangled or not. Entropic-based separability criteria in general detect entanglement by virtue of how much information is present in the correlations between subsystems, rather than in the subsystems themselves.

We show that for the mixture to remain entangled, the dimension $d$ of these entangled pure states must exceed the number of states $M$ in the convex combination. Also, we show that when mixing $d$ number of these entangled pure states, the resulting mixed state is entangled provided that their weights are not all equal. We also show that we can construct mixed states which are genuinely entangled at the expense of making the set of pure states that we can mix from smaller.

The paper will begin by introducing the separability criterion based on purity in Section 2 which we will be utilizing in determining which states are entangled or not. In Section 3, we study the conditions wherein the mixture of entangled pure states with maximally mixed one-qudit reduced density matrices remain entangled. Then we restrict the set of these pure states that we can mix from in Section 4 with the advantage of having our resulting mixed state to have an entanglement that we know to be genuine. Section 5 discusses the consistency of our results with the measure of entangled mixed states for bipartite qubits known as concurrence. Finally, we draw our conclusions in Section 6.

\section{Purity-based separability criterion}

Let $\rho$ be a density matrix for a composite system. We say that $\rho$ is a \textbf{product state} if there exist states $\rho^A$ for Alice and $\rho^B$ for Bob such that
\begin{equation}
	\rho = \rho^A\otimes\rho^B.
\end{equation}
The state is called \textbf{separable}, if there are convex weights $p_i$ and product states $\rho_i^A\otimes\rho_i^B$ such that
\begin{equation}
	\label{eq:rhoseparable}
	\rho=\sum_i p_i \rho_i^A\otimes\rho_i^B
\end{equation}
holds. Otherwise, the state is called \textbf{entangled}. In general, for a mixed state of $N$ systems, separability entails that we could write the state as
\begin{equation}
	\label{eq:nrhoseparable}
	\rho=\sum_i p_i \rho_i^1\otimes\cdots\otimes\rho_i^N.
\end{equation}

In this paper, we also distinguish between $k$-separable states and genuinely multipartite entangled states. A state is $k$-separable if it factorizes into $k$ states $\rho_i$, each of which describes either one or several subsystems, that is, the $N$-partite state

\begin{equation}
\rho=\sum_i p_i \rho_i^1\otimes\cdots\otimes\rho_i^k
\end{equation}

\noindent is $k$-separable with $1 < k \leq N$. On the other hand, an $N$-separable $N$-partite state is called fully separable and a state which is not 2-separable (or biseparable) is called genuinely multipartite entangled.

This paper utilizes a separability criterion based on purity due to its straightforwardness. Purity is defined as
\begin{equation}
	\label{eq:purity}
	P(\rho) = \text{tr}~\rho^2
\end{equation}
whose value ranges from $1/d$ ($d$ being the dimension of the system) for maximally mixed states and $1$ for pure states. Taking the purity of Eq.~(\ref{eq:nrhoseparable}), we obtain
\begin{equation}
	\label{eq:pursep}
	\text{tr}~\rho^2 = \sum_{i,i'}p_i p_{i'}\text{tr}~(\rho_i^1\cdot\rho_{i'}^1)\cdots\text{tr}~(\rho_i^n\cdot\rho_{i'}^n)
\end{equation}
where orthogonality between states is not assumed.

Now, if we trace out all the other subsystems except one, say the first subsystem $\rho^1$ and obtain the purity, we then have
\begin{equation}
	\label{eq:purone}
	\text{tr}~(\rho^1)^2 = \sum_{i,i'}p_i p_{i'}\text{tr}~(\rho_i^1\cdot\rho_{i'}^1).
\end{equation}
As can be easily seen, the purity in Eq.~(\ref{eq:purone}) is greater than Eq.~(\ref{eq:pursep}) since the factors of the form $\text{tr}~(\rho_i^j\cdot\rho_{i'}^j)$ is bounded by one. In general, we have
\begin{equation}
	\label{eq:ineq}
	\text{tr}~\rho^2 \leq \text{tr}~(\rho^i)^2
\end{equation}
for all subsystems $i$, the equality holding if we assume orthogonality between the states. Since Eq.~(\ref{eq:nrhoseparable}) is the general form of separable states, we then have the following proposition:
\begin{prop}
	If $\rho$ is separable, then $\text{tr}~\rho^2\leq\text{tr}~(\rho^i)^2$ for all the subsystems $i$.
\end{prop}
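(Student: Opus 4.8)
The plan is to reduce the proposition to a term-by-term comparison between the expansion of $\text{tr}~\rho^2$ already written in Eq.~(\ref{eq:pursep}) and the corresponding expansion of $\text{tr}~(\rho^k)^2$ for a fixed subsystem $k$. First I would fix an arbitrary subsystem $k$ and compute its reduced density matrix. Because each factor $\rho_i^j$ is a normalized state with $\text{tr}~\rho_i^j = 1$, tracing out all subsystems except the $k$-th in Eq.~(\ref{eq:nrhoseparable}) collapses every remaining tensor factor to unity and leaves $\rho^k = \sum_i p_i \rho_i^k$, whose purity is exactly Eq.~(\ref{eq:purone}) with the superscript $1$ replaced by $k$.

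Next I would compare the two sums factor by factor over the same index pair $(i,i')$. Writing the full-purity summand as $p_i p_{i'}\,\text{tr}~(\rho_i^k\cdot\rho_{i'}^k)\prod_{j\neq k}\text{tr}~(\rho_i^j\cdot\rho_{i'}^j)$, the only structural difference from the reduced-purity summand $p_i p_{i'}\,\text{tr}~(\rho_i^k\cdot\rho_{i'}^k)$ is the extra product over $j\neq k$. The comparison therefore hinges on two facts about the Hilbert--Schmidt overlaps $\text{tr}~(\rho_i^j\cdot\rho_{i'}^j)$: that they are nonnegative, and that they are bounded above by one.

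This is the step I expect to be the crux, since the paper asserts the bound without justification. For nonnegativity I would write $\text{tr}~(\rho_i^j\cdot\rho_{i'}^j)=\text{tr}\big((\rho_i^j)^{1/2}\rho_{i'}^j(\rho_i^j)^{1/2}\big)$ and note that the operator inside the trace is positive semidefinite, so its trace is nonnegative. For the upper bound I would invoke the Cauchy--Schwarz inequality for the Hilbert--Schmidt inner product, $\text{tr}~(\rho_i^j\cdot\rho_{i'}^j)\leq\sqrt{\text{tr}~(\rho_i^j)^2\,\text{tr}~(\rho_{i'}^j)^2}$, together with the fact stated below Eq.~(\ref{eq:purity}) that the purity of any state is at most one; hence each overlap lies in $[0,1]$.

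With these bounds in hand the conclusion is immediate: for each $(i,i')$ the weight $p_i p_{i'}$ and the factor $\text{tr}~(\rho_i^k\cdot\rho_{i'}^k)$ are nonnegative, while $\prod_{j\neq k}\text{tr}~(\rho_i^j\cdot\rho_{i'}^j)\leq 1$, so every summand of $\text{tr}~\rho^2$ is dominated by the corresponding summand of $\text{tr}~(\rho^k)^2$. Summing over all $(i,i')$ preserves the inequality and yields $\text{tr}~\rho^2\leq\text{tr}~(\rho^k)^2$; since $k$ was arbitrary, the bound holds for every subsystem, which is exactly the claimed Eq.~(\ref{eq:ineq}).
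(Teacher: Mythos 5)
Your proof takes essentially the same route as the paper's: a term-by-term comparison of the expansion of $\text{tr}~\rho^2$ in Eq.~(\ref{eq:pursep}) against that of $\text{tr}~(\rho^k)^2$ in Eq.~(\ref{eq:purone}), with the extra overlap factors $\text{tr}~(\rho_i^j\cdot\rho_{i'}^j)$ bounded in $[0,1]$. The only difference is that you supply the justification (positivity of the sandwiched operator and Cauchy--Schwarz for the Hilbert--Schmidt inner product) for the bound the paper simply asserts, and that justification is correct.
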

It is important to note here that the above statement's converse does not hold. In other words, having $\text{tr}~\rho^2\leq\text{tr}~(\rho^i)^2$ true for all the subsystems $i$ does not guarantee that the state $\rho$ is separable. Now, we can take the contrapositive of the above statement as follows:
\begin{prop}
	If there exists a subsystem $i$ such that $\text{tr}~(\rho_i)^2 < \text{tr}~\rho^2$, then $\rho$ is entangled.
\end{prop}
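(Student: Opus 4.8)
The plan is to recognize that this statement is nothing more than the logical contrapositive of the preceding Proposition~1. Since an implication and its contrapositive are logically equivalent, once Proposition~1 is in hand the present statement needs no separate argument: were $\rho$ separable, Proposition~1 would force $\text{tr}~\rho^2 \leq \text{tr}~(\rho^i)^2$ for \emph{every} subsystem $i$; the assumed existence of a subsystem $i$ with $\text{tr}~(\rho_i)^2 < \text{tr}~\rho^2$ directly contradicts this, so $\rho$ cannot be separable and is therefore entangled.

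Because the entire content really resides in Proposition~1, I would spend the bulk of the write-up making that inequality airtight. Starting from the general separable form in Eq.~(\ref{eq:nrhoseparable}), I would expand the global purity as in Eq.~(\ref{eq:pursep}) and the single-subsystem purity for a fixed distinguished subsystem $m$ as in Eq.~(\ref{eq:purone}). Comparing the two expansions term by term, each summand of the global purity equals the corresponding summand of the reduced purity multiplied by the extra factor $\prod_{j\neq m}\text{tr}~(\rho_i^j\cdot\rho_{i'}^j)$. The argument then reduces to two elementary facts about density matrices $\sigma,\tau$: first, $\text{tr}~(\sigma\tau)\geq 0$ by positivity, so that every cross term $p_i p_{i'}\,\text{tr}~(\rho_i^m\cdot\rho_{i'}^m)$ is nonnegative; and second, $\text{tr}~(\sigma\tau)\leq \lVert\sigma\rVert_\infty\,\text{tr}~\tau \leq 1$, since the eigenvalues of a density matrix lie in $[0,1]$. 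The extra factor is thus a product of numbers in $[0,1]$, hence itself at most $1$.

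With these bounds, multiplying each nonnegative summand by a factor in $[0,1]$ can only decrease it, and summing over $i,i'$ preserves the inequality, yielding $\text{tr}~\rho^2 \leq \text{tr}~(\rho^m)^2$ and hence Proposition~1. The only point demanding any care---the closest thing here to an obstacle---is the bookkeeping that legitimizes a \emph{termwise} comparison of the two sums rather than a mere comparison of individual factors; this is precisely where positivity of $\text{tr}~(\rho_i^m\cdot\rho_{i'}^m)$, and not just of the weights $p_i$, is essential, since a nonnegative quantity is reduced when scaled by a factor in $[0,1]$. The final statement then follows immediately by contraposition.
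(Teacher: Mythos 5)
Your proposal is correct and follows essentially the same route as the paper: the paper likewise obtains this proposition purely by contraposition of Proposition~1, whose own justification is the termwise comparison of Eqs.~(6) and (7) using the bound $\text{tr}~(\rho_i^j\cdot\rho_{i'}^j)\leq 1$. Your only addition is to make explicit the nonnegativity of $\text{tr}~(\rho_i^m\cdot\rho_{i'}^m)$ needed for the termwise argument, a point the paper leaves implicit.
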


Proposition 2 qualitatively encapsulates the physical implication of what it means to be entangled. To be entangled means that the maximal knowledge of the whole does not necessarily constitute the maximal knowledge of its parts. The ignorance of the subsystem could be qualitatively attributed to the fact that part of it is correlated to the other subsystem, hence looking at the subsystem alone is not enough to gain the entire information about it. The inequality $\text{tr}~(\rho_i)^2 < \text{tr}~\rho^2$ says just that, i.e., if the purity of the whole system (your knowledge of the whole system) is greater than the purity of the subsystem (your knowledge of the subsystem), then this means that the subsystems must be correlated, hence entangled. In other words, you obtain lesser information when looking at the individual subsystems rather than the whole and this missing information is attributed to the correlations of the subsystems with each other. The leap from purity to information (knowledge) is done above since the purity is related to the 2-Renyi entropy given by $S_{\alpha=2}(\rho) = -\frac{1}{2}\log{\text{tr}~(\rho^2)}$ \cite{entropicinequality,information}. Since the 2-Renyi entropy is a monotonic function of the purity, we considered the latter instead of the former for the sake of simplicity. In fact, Proposition 2 is a simpler restatement of the 2-Renyi entropic inequality $S_{\alpha=2}(\rho) \geq S_{\alpha=2}(\rho_i)$.

Also, we use the above criterion as opposed to existing ones since it is very straightforward to apply. Given a state $\rho$, obtain its purity then take the partial traces in order to obtain the purity of the subsystems. If one of the purity of the subsystems turns out to be less than the purity of the whole, then the state $\rho$ is entangled. Note also that the utility of our separability criterion is not limited in any way by the number of systems or the dimensionality of the Hilbert space.

Just like the Renyi entropy, there are certain states where the purity-based separability criterion fails \cite{PPT}. Consider the St\o rmer state \cite{stormer,stormer2} which is a $3\times3$ parameterized by $\alpha$ with values $2\leq\alpha\leq 5$:
\begin{equation}
	\label{eq:stommer}
	\sigma_\alpha = \frac{2}{7}|\psi_+\rangle\langle\psi_+|+\frac{\alpha}{7}\sigma_+ + \frac{5-\alpha}{7}\sigma_-
\end{equation}
where
\begin{equation}
	\label{eq:plus}
	\sigma_+ = \frac{1}{3}(|0\rangle|1\rangle\langle 0|\langle 1| + |1\rangle|2\rangle\langle 1|\langle 2| + |2\rangle|0\rangle\langle 2|\langle 0|)
\end{equation}
\begin{equation}
	\label{eq:minus}
	\sigma_- = \frac{1}{3}(|1\rangle|0\rangle\langle 1|\langle 0| + |2\rangle|1\rangle\langle 2|\langle 1| + |0\rangle|2\rangle\langle 0|\langle 2|)
\end{equation}
\begin{equation}
	\label{eq:psiplus}
	|\psi_+\rangle = \frac{1}{\sqrt{3}}(|00\rangle+|11\rangle+|22\rangle).
\end{equation}
It is known that the St\o rmer state is separable for $2\leq\alpha\leq 3$, bound entangled for $3<\alpha\leq 4$ and free entangled for $4<\alpha\leq 5$. However, simple calculation will show that our criterion detects the entanglement (that is, $P(\sigma_{\alpha,i}) < P(\sigma_\alpha)$) for $\alpha < -1$ and $\alpha > 6$. 

The case considered above is important since it provides a counterexample to the converse of our criterion. In other words if $\rho$ is entangled, then \textit{it is not necessarily true} that there exists a subsystem $i$ such that $\text{tr}~(\rho_i)^2 < \text{tr}~\rho^2$, as is the case when $3<\alpha\leq 5$, whereby $\text{tr}~(\rho_i)^2 \geq \text{tr}~\rho^2$ even though $\sigma_\alpha$ is entangled for these values of $\alpha$. What this entails is that we can actually think of two types of entanglement: (I) entangled states which are detected by Proposition 2 and (II) entangled states which are not detected by Proposition 2. Hence, any entanglement detected using Proposition 2 will fall under type I. These are actually entangled states which are useful in specific nonclassical tasks such as the reduction of communication complexity or quantum cryptography as pointed out in \cite{partial}. In particular, under the context of optimal state merging protocol, the same reference mentioned two regimes which they called the classical (type II in our case where the purity of the whole is less than the purity of its parts) whose partial information $S(A|B)$ is positive and the quantum (type I in our case where the purity of the whole is greater than the purity of its parts) whose partial information $S(A|B)$ is negative. The partial information \cite{horodecki,partial} is defined as
\begin{equation}
S(A|B) = S(\rho_{AB}) - S(\rho_B)
\end{equation}
where $S(\rho)$ is entropy of the state $\rho$.


Another important thing to note about Proposition 2 is that the entanglement it detects may not be genuine. It is easy to see this by considering the following example. We construct the biseparable 4-partite state $\rho = \rho_{12}\otimes\rho_{34}$ where $\rho_{12} = \rho_{34} = |\psi\rangle\langle\psi|$, and $|\psi\rangle$ could be any of the bipartite entangled Bell states $|\psi_\pm\rangle = \frac{1}{\sqrt{2}}(|00\rangle \pm |11\rangle)$ and $|\phi_\pm\rangle = \frac{1}{\sqrt{2}}(|01\rangle \pm |10\rangle)$. Then it is easy to see that $\text{tr}~(\rho_i)^2 < \text{tr}~\rho^2$ for all $i$. In other words, Proposition 2 detects $\rho$ to be entangled even though this state is obviously biseparable. The reason is that our proposition detects all the entanglements within the system, and having an $i$ such that $\text{tr}~(\rho_i)^2 < \text{tr}~\rho^2$ only means that this subsystem is entangled with some other particle, but not necessarily with all of them.

\section{Entanglement of the mixture of entangled pure states having maximally mixed one-qudit density matrices}
We define the class of entangled pure states, $\mathcal{M}_N(d)$, in the Hilbert space $\mathcal{H}=\otimes_{k=1}^N\mathcal{H}_k=\mathcal{H}_1\otimes\mathcal{H}_2\otimes\dots\otimes\mathcal{H}_N$, where $d=\mbox{dim}\mathcal{H}_k$. Elements of $\mathcal{M}_N(d)$, $\ket{\Phi}$, has the defining property
\begin{equation}
\label{eq:maximal}
\mbox{tr}_{\mathcal{H}\slash\mathcal{H}_k}\denmat{\Phi}= \frac{\mathbb{I}_k}{d}
\end{equation} 
where $\mathbb{I}_k$ is the identity matrix in $\mathcal{H}_k$ and $d$ is the dimension $\mathcal{H}_k$. Note that Eq.~(\ref{eq:maximal}) means that the dimensions of the subsystems must all be equal, that is, $d_1 = \cdots = d_N = d$. The property also implies the normalization condition $\braket{\Phi}{\Phi}=1$. For $N = 2$ and $d = 2$, the states satisfying Eqs.~(\ref{eq:maximal}) are the four Bell states. Some examples of states belonging to $\mathcal{M}_N(d)$ for arbitrary $N$ and $d$ include the $N$-qudit states ($d^{N-1}-1$ of them) given by

\begin{equation}
|\Phi_\text{d}^j\rangle = \frac{1}{\sqrt{d}}\sum_{k=0}^{d-1}|k\rangle^{\otimes j-1}|k+1\rangle|k\rangle^{\otimes N-j}
\end{equation}

\noindent where $j \in \{1,2,3,\dots,d^{N-1}-1\}$ as well as the common $N$-qudit GHZ states given by

\begin{equation}
|\Phi_\text{GHZ}\rangle = \frac{1}{\sqrt{d}}(|0\rangle^{\otimes N} + |1\rangle^{\otimes N} + \cdots + |d\rangle^{\otimes N}).
\end{equation}

We are interested in the mixture of the elements of $\mathcal{M}_N(d)$. We construct the mixed state
\begin{equation}
\label{eq:mixedstate}
\rho(\vec{\lambda}_M,\vec{\Phi})=\sum_{k=1}^M \lambda_k \denmat{\Phi_k}
\end{equation}
where $\vec{\lambda}_M=\{\lambda_1,\dots,\lambda_M\}$ with $\sum_{k=1}^M\lambda_k=1$, $\lambda_k > 0$ and $\vec{\Phi}=\{\ket{\Phi_1},\dots,\ket{\Phi_M}\}$ with $\ket{\Phi_k}\in \mathcal{M}_N(d)$, for some positive integer $M\geq 2$. We refer to the collection of these states as $\widehat{\mathcal{M}}_{M,N}(d)$. When it happens that the vectors in $\vec{\Phi}$ are pairwise orthogonal, that is, $\braket{\Phi_k}{\Phi_l}=\delta_{kl}$, we denote the subset by $\widehat{\mathcal{M}}_{M,N}^{\mbox{ort}}(d)$. The mixed state $\rho$ is a mixture of entangled pure states. However, it is not necessary that the mixture is entangled as can be readily demonstrated. As a trivial example, the concurrence of the mixed state $\rho = \frac{1}{2}|\psi\rangle\langle\psi| + \frac{1}{2}|\phi\rangle\langle\phi|$ is zero, with $|\psi\rangle$ and $|\phi\rangle$ being distinct bipartite entangled Bell states.

We now wish to investigate this class of mixed states by means of the separability criterion given in this paper. We can readily establish the following equalities
\begin{equation}
\mbox{tr}\rho^2 = \sum_{k,l=1}^M \lambda_k \lambda_l \left|\braket{\Phi_k}{\Phi_l}\right|^2
\end{equation}
\begin{equation}
\mbox{tr}_k \rho_k^2 = \frac{1}{d}.
\end{equation}
Then $\rho$ is entangled provided
\begin{equation}
\sum_{k,l=1}^M \lambda_k \lambda_l |\braket{\Phi_k}{\Phi_l}|^2>\frac{1}{d}.
\end{equation}

Also central to the development to follow is the function
\begin{eqnarray}\label{nonort}
P(\vec{\lambda}_M,\vec{\Phi})=\sum_{k,l=1}^M \lambda_k \lambda_l \left|\braket{\Phi_k}{\Phi_l}\right|^2
\end{eqnarray}
which is just the purity of the given mixed state. The ability of the separability criterion to detect entanglement will depend on the minimum of this function that occurs when the involved states are mutually orthogonal and we denote this minimum by
\begin{equation}\label{ort}
P_{\mbox{ort}}(\vec{\lambda}_M,\vec{\Phi})=\sum_{k=1}^M \lambda^2_k = P_{\mbox{ort}}(\vec{\lambda}_M).
\end{equation}

\begin{definition}
	We refer to a subset, $\mathcal{M}$, of $\widehat{\mathcal{M}}_{M,N}(d)$ as entangled if every density matrix $\rho$ of $\mathcal{M}$ is entangled.
\end{definition}

\begin{lemma}
	If $\widehat{\mathcal{M}}_{M,N}^{\mbox{ort}}(d)$ is entangled, then $\widehat{\mathcal{M}}_{M,N}(d)$ is entangled.
\end{lemma}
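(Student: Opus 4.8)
The plan is to work entirely through the purity-based criterion of Proposition 2, exploiting the observation that among all mixtures sharing a fixed weight vector $\vec{\lambda}_M$, the orthogonal mixture realizes the smallest possible purity. Since every $\rho$ in $\widehat{\mathcal{M}}_{M,N}(d)$ has maximally mixed one-qudit reductions, so that $\mbox{tr}_k\rho_k^2=1/d$, the criterion declares $\rho$ entangled the moment its purity exceeds $1/d$; controlling the minimal purity therefore controls detectability across the entire family at once.

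First I would establish the central inequality $P(\vec{\lambda}_M,\vec{\Phi})\geq P_{\mbox{ort}}(\vec{\lambda}_M)$ for every admissible $\vec{\Phi}$. This follows by splitting the double sum in Eq.~(\ref{nonort}) into its diagonal part, which evaluates to $\sum_k\lambda_k^2=P_{\mbox{ort}}(\vec{\lambda}_M)$ because each $\ket{\Phi_k}$ is normalized, and its off-diagonal part $\sum_{k\neq l}\lambda_k\lambda_l|\braket{\Phi_k}{\Phi_l}|^2$. Every summand of the latter is non-negative, since $\lambda_k,\lambda_l>0$ and $|\braket{\Phi_k}{\Phi_l}|^2\geq 0$, so the off-diagonal contribution can only raise the purity above its orthogonal value, with equality exactly when all overlaps vanish. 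This confirms that orthogonality minimizes the purity at fixed weights.

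Next I would translate the hypothesis into a purity statement. Saying that $\widehat{\mathcal{M}}_{M,N}^{\mbox{ort}}(d)$ is entangled means, through Proposition 2, that every orthogonal mixture is detected, i.e. $P_{\mbox{ort}}(\vec{\lambda}_M)>1/d$ for all weight vectors $\vec{\lambda}_M$. Combining this with the inequality of the previous step, I obtain for an arbitrary element $\rho(\vec{\lambda}_M,\vec{\Phi})$ of $\widehat{\mathcal{M}}_{M,N}(d)$ the chain $P(\vec{\lambda}_M,\vec{\Phi})\geq P_{\mbox{ort}}(\vec{\lambda}_M)>1/d=\mbox{tr}_k\rho_k^2$. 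Proposition 2 then certifies that $\rho$ is entangled, and since $\rho$ was arbitrary, the whole set $\widehat{\mathcal{M}}_{M,N}(d)$ is entangled, as claimed.

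The main obstacle is conceptual rather than computational: because the purity criterion is only sufficient for entanglement (as the St\o rmer counterexample shows), I must use the hypothesis in the form that the orthogonal mixtures are \emph{detected} by this criterion, namely $P_{\mbox{ort}}>1/d$, rather than merely being entangled by some external route; within the paper's framework this is the intended reading. Once the extremal inequality $P\geq P_{\mbox{ort}}$ is in hand, the remainder is a one-line deduction. A secondary point worth flagging is the tacit assumption that $\mathcal{M}_N(d)$ contains enough pairwise-orthogonal elements for $\widehat{\mathcal{M}}_{M,N}^{\mbox{ort}}(d)$ to be non-empty, which constrains $M$ relative to $d^N$.
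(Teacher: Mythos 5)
Your proof follows essentially the same route as the paper's: split the purity sum into its diagonal part $\sum_k\lambda_k^2=P_{\mbox{ort}}(\vec{\lambda}_M)$ and a non-negative off-diagonal part to obtain $P(\vec{\lambda}_M,\vec{\Phi})\geq P_{\mbox{ort}}(\vec{\lambda}_M)$, read the hypothesis as $P_{\mbox{ort}}(\vec{\lambda}_M)>1/d$ for all weight vectors, and chain the inequalities through Proposition 2. Your closing caveat—that ``entangled'' in the hypothesis must be read as \emph{detected by the purity criterion}, since the criterion is only sufficient—is a subtlety the paper's proof glosses over silently, and your reading is indeed the one under which the Lemma is invoked in Theorem 1, where $P_{\mbox{ort}}(\vec{\lambda}_M)\geq 1/M>1/d$ is established directly.
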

\begin{proof}
For every $\vec{\lambda}_M$ we have the inequality $P(\vec{\lambda}_M,\vec{\Phi})\geq P_{\mbox{ort}}(\vec{\lambda}_M)$, with equality only when $\vec{\Phi}$ is mutually orthogonal. This follows from the positivity of all the terms in Eq.~(\ref{nonort}) and the fact that $P_{\mbox{ort}}(\vec{\lambda}_M)$ is just the diagonal of $P(\vec{\lambda}_M,\vec{\Phi})$. Since $\widehat{\mathcal{M}}_{M,N}^{\mbox{ort}}(d)$ is, by hypothesis, entangled, we have $P_{\mbox{ort}}(\vec{\lambda}_M)>1/d$ for all $\vec{\lambda}_M$. Then from the inequality $P(\vec{\lambda}_M,\vec{\Phi})\geq P_{\mbox{ort}}(\vec{\lambda}_M)$, we have also $P(\vec{\lambda}_M,\vec{\Phi})>1/d$ for all $\vec{\lambda}_M$ and $\vec{\Phi}$ so that $\widehat{\mathcal{M}}_{M,N}(d)$ is entangled.
\end{proof}

\begin{theorem}
	$\widehat{\mathcal{M}}_{M,N}(d)$ is entangled for all $d>M$.
\end{theorem}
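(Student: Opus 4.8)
The plan is to lean on Lemma~1 and thereby reduce everything to the mutually orthogonal case. By that lemma it suffices to prove that $\widehat{\mathcal{M}}_{M,N}^{\mbox{ort}}(d)$ is entangled whenever $d>M$, which under the purity criterion means establishing the strict inequality $P_{\mbox{ort}}(\vec{\lambda}_M)>1/d$ for every admissible weight vector. Since for orthogonal states the purity reduces to $P_{\mbox{ort}}(\vec{\lambda}_M)=\sum_{k=1}^M\lambda_k^2$, the whole problem collapses to minimizing this quadratic over the probability simplex $\lambda_k>0$, $\sum_{k=1}^M\lambda_k=1$.

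First I would compute the minimum of $\sum_{k=1}^M\lambda_k^2$ under the constraint $\sum_{k=1}^M\lambda_k=1$. The quickest argument is Cauchy--Schwarz: $1=\left(\sum_{k=1}^M\lambda_k\right)^2\leq M\sum_{k=1}^M\lambda_k^2$ immediately yields $\sum_{k=1}^M\lambda_k^2\geq 1/M$, the minimum $1/M$ being attained at the uniform weights $\lambda_k=1/M$. (A Lagrange multiplier, or the convexity of $x\mapsto x^2$, gives the same conclusion.) Thus $P_{\mbox{ort}}(\vec{\lambda}_M)\geq 1/M$ for every admissible $\vec{\lambda}_M$, and because the stipulation $\lambda_k>0$ keeps the minimizer in the interior of the simplex, this bound holds uniformly with no degenerate boundary cases to inspect.

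The hypothesis $d>M$ then finishes the proof. It is equivalent to $1/M>1/d$, so chaining the two inequalities gives $P_{\mbox{ort}}(\vec{\lambda}_M)\geq 1/M>1/d$ for every $\vec{\lambda}_M$, which is precisely the entanglement condition for $\widehat{\mathcal{M}}_{M,N}^{\mbox{ort}}(d)$; Lemma~1 then lifts this to the entanglement of the full set $\widehat{\mathcal{M}}_{M,N}(d)$. I anticipate no genuine obstacle here: the single substantive step is the elementary simplex minimization, and the role of $d>M$ is exactly that it upgrades the non-strict bound $\sum_{k=1}^M\lambda_k^2\geq 1/M$ into the strict separation from $1/d$ that the criterion demands.
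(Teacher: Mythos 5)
Your proposal is correct and follows essentially the same route as the paper: reduce to the orthogonal case via Lemma~1, minimize $P_{\mbox{ort}}(\vec{\lambda}_M)=\sum_{k=1}^M\lambda_k^2$ over the probability simplex to obtain the bound $1/M$, and then invoke $d>M$ to get the strict inequality $P_{\mbox{ort}}(\vec{\lambda}_M)\geq 1/M > 1/d$. The only difference is cosmetic: you obtain the simplex minimum by Cauchy--Schwarz, whereas the paper sets the first derivative of the constrained purity to zero and solves for the uniform weights $\lambda_k=1/M$; both yield the same bound.
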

\begin{proof}
	Using the above Lemma it is sufficient to establish that $\widehat{\mathcal{M}}_{M,N}^{\mbox{ort}}(d)$ is entangled when $d>M$. Imposing the constraint $\sum_{k=1}^M\lambda_k=1$, the purity function $P_{\mbox{ort}}(\vec{\lambda}_M)$ assumes the form
	\begin{equation}
	P_{\mbox{ort}}(\vec{\lambda}_M)=\sum_{k=1}^{M-1} \lambda_k^2 + (1-\sum_{k=1}^{M-1}\lambda_k)^2 .
	\end{equation}
	The minimum is obtained by taking its first derivative and equating to zero, $\partial P_{\mbox{ort}}/\partial \lambda_l=0$,
	\begin{equation}
	 \lambda_l +  \sum_{k=1}^{M-1}\lambda_k = 1.
	\end{equation}
	The solution is $\lambda_k=1/M$ for all $k=1,\dots,M$, which we will denote as $\vec{\lambda}_{M,0}$. Then we have the inequality $P_{\mbox{ort}}(\vec{\lambda}_M)\geq 1/M$ for all $\vec{\lambda}_M$. Since $d>M$, then $P_{\mbox{ort}}(\vec{\lambda}_M)> 1/d$. Hence $\widehat{\mathcal{M}}_{M,N}^{\mbox{ort}}(d)$ is entangled and $\widehat{\mathcal{M}}_{M,N}(d)$ is likewise entangled by the above Lemma.
\end{proof}

\begin{theorem}
	$\widehat{\mathcal{M}}_{d,N}(d)/\{\rho(\vec{\lambda}_{M,0},\vec{\Phi}^{\mbox{ort}})\}$ is entangled, where $\vec{\lambda}_{M,0}=\{1/M,\dots,1/M\}$.
\end{theorem}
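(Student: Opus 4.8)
The plan is to show that every density matrix in $\widehat{\mathcal{M}}_{d,N}(d)$ other than the excluded one satisfies the strict purity inequality $P(\vec{\lambda}_M,\vec{\Phi}) > 1/d$, which by Proposition 2 together with $\mbox{tr}_k\rho_k^2 = 1/d$ certifies entanglement. The natural starting point is the two-step chain
\begin{equation}
P(\vec{\lambda}_M,\vec{\Phi}) \geq P_{\mbox{ort}}(\vec{\lambda}_M) \geq \frac{1}{d},
\end{equation}
where the first inequality is the content of the Lemma and the second specializes the bound $P_{\mbox{ort}}(\vec{\lambda}_M) \geq 1/M$ from Theorem 1 to the case $M = d$. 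It therefore suffices to determine precisely when both inequalities degenerate to equalities simultaneously.

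Next I would characterize the two equality conditions separately. The gap in the first inequality is
\begin{equation}
P(\vec{\lambda}_M,\vec{\Phi}) - P_{\mbox{ort}}(\vec{\lambda}_M) = \sum_{k\neq l}\lambda_k\lambda_l\left|\braket{\Phi_k}{\Phi_l}\right|^2,
\end{equation}
a sum of non-negative terms carrying strictly positive weights $\lambda_k$; hence it vanishes if and only if $\braket{\Phi_k}{\Phi_l} = 0$ for all $k \neq l$, that is, exactly when $\vec{\Phi}$ is mutually orthogonal. For the second inequality, $P_{\mbox{ort}}$ is strictly convex on the simplex $\sum_k\lambda_k = 1$, so its minimum $1/d$ is attained only at the unique point $\vec{\lambda}_{M,0} = \{1/d,\dots,1/d\}$ already identified in the proof of Theorem 1; for every other weight vector one has $P_{\mbox{ort}}(\vec{\lambda}_M) > 1/d$.

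Finally I would combine the two characterizations. The equality $P(\vec{\lambda}_M,\vec{\Phi}) = 1/d$ forces both steps of the chain to be tight at once, which requires the states to be mutually orthogonal \emph{and} the weights to be uniform; this is precisely the excluded set $\{\rho(\vec{\lambda}_{M,0},\vec{\Phi}^{\mbox{ort}})\}$. Consequently, for any state in $\widehat{\mathcal{M}}_{d,N}(d)$ that is either non-orthogonal or non-uniformly weighted, at least one step is strict and $P(\vec{\lambda}_M,\vec{\Phi}) > 1/d$, so the state is entangled.

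The only genuinely delicate point is confirming that the two equality conditions are independent and that their common locus coincides exactly with the removed set. This is handled cleanly by the chain: since $P_{\mbox{ort}} \geq 1/d$ holds unconditionally, strictness of the first inequality alone already yields $P > 1/d$ in the non-orthogonal case, while in the orthogonal case the question collapses to the strict-convexity argument of Theorem 1. Thus no obstacle arises beyond careful bookkeeping of the two cases, and the excluded set is seen to be exactly the boundary configuration where the purity criterion fails to detect entanglement.
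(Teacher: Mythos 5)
Your proposal is correct and takes essentially the same route as the paper: both rest on the chain $P(\vec{\lambda}_M,\vec{\Phi})\geq P_{\mbox{ort}}(\vec{\lambda}_M)\geq 1/d$, with the paper splitting into the cases $\vec{\lambda}_M\neq\vec{\lambda}_{M,0}$ (strictness of the second inequality, from Theorem 1's minimization) and $\vec{\Phi}\neq\vec{\Phi}^{\mbox{ort}}$ (strictness of the first, from the positive cross terms), which is exactly your equality-condition analysis. Your version is somewhat more carefully organized (explicitly noting that $\lambda_k>0$ is what forces orthogonality at equality), but the substance is identical.
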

\begin{proof}
	The purity of the state $\rho(\vec{\lambda}_{M,0},\vec{\Phi}^{\mbox{ort}})$ is $P_{\mbox{ort}}(\vec{\lambda}_{M,0})=1/d$ so that the criteria fails on this state. However, for any $\vec{\lambda}_M\neq \vec{\lambda}_{M,0}$ the purity is necessarily $P_{\mbox{ort}}(\vec{\lambda}_M)>1/d$. This, together with the fact that $P(\vec{\lambda}_{M},\vec{\Phi})>P_{\mbox{ort}}(\vec{\lambda}_{M,0})$ for all $\vec{\Phi}\neq\vec{\Phi}_{\mbox{ort}}$ imply that $\widehat{\mathcal{M}}_{d,N}(d)/\{\rho(\vec{\lambda}_{M,0},\vec{\Phi}^{\mbox{ort}})\}$ is entangled.
\end{proof}

%

\section{Genuine entanglement of the mixture of entangled pure states with separable reduced density matrices}
We go back to the caveat at the end of Section 2 that our criterion detects entanglement which may or may not be genuine. In particular, although the class of mixed states that were constructed following the theorems in the previous section are entangled, we could not be sure whether the entanglement is genuine or not. 

Here we define a class of states $\mathcal{N}_N(d)\subset\mathcal{M}_N(d)$. Its elements, which are entangled pure states $|\Phi\rangle$, has the property
\begin{equation}
\label{eq:2separable}
\rho_s = \mbox{tr}_{\mathcal{H}_i}\denmat{\Phi}
\end{equation}
is a separable state. In other words, tracing out a single subsystem leaves a state which is separable. As it happens, this is also one of the criteria for maximal entanglement \cite{gisin,higuchi}, i.e., a measurement on any one of the qudits destroys the entanglement between the remaining ones. This property is meaningful only for states with $N \geq 3$ as should be the case since the issue of genuine entanglement arises only for these multipartite states. 

For $N = 3$ and $d = 2$, these are the states of subtype 2-0 using the classification in \cite{sabin}. For an explicit example, the 4-qubit GHZ state
\begin{equation}
\label{eq:4GHZ}
|GHZ_4^\pm\rangle = \frac{1}{\sqrt{2}}(|0000\rangle \pm |1111\rangle)
\end{equation}
belongs to $\mathcal{N}_N(d)$ while the 4-qubit Dicke state
\begin{align}
\label{eq:42Dicke}
|D_2^4\rangle = \frac{1}{\sqrt{6}}&(|1100\rangle + |1010\rangle + |1001\rangle \\ \nonumber &+ |0110\rangle + |0101\rangle + |0011\rangle)
\end{align}
and the 4-qubit W state
\begin{align}
\label{eq:4W}
|W_4\rangle = \frac{1}{\sqrt{4}}(|0001\rangle + |0010\rangle + |0100\rangle + |1000\rangle)
\end{align}
do not.

As with the previous section, we construct the mixed state
\begin{equation}
\label{eq:mixedstate2}
\rho(\vec{\lambda}_M,\vec{\Phi})=\sum_{k=1}^M \lambda_k \denmat{\Phi_k}
\end{equation}
where $\vec{\lambda}_M=\{\lambda_1,\dots,\lambda_M\}$ with $\sum_{k=1}^M\lambda_k=1$, $\lambda_k > 0$ and $\vec{\Phi}=\{\ket{\Phi_1},\dots,\ket{\Phi_M}\}$ with $\ket{\Phi_k}\in \mathcal{N}_N(d)$, for some positive integer $M\geq 2$. We refer to the collection of these states as $\tilde{\mathcal{N}}_{M,N}(d)$. We then show that the following theorem holds:

\begin{theorem}
Suppose that $\rho \in \tilde{\mathcal{N}}_{M,N}(d)$. If $\text{tr}~(\rho_i)^2 < \text{tr}~\rho^2$ for all $i$, then $\rho$ is genuinely entangled.
\end{theorem}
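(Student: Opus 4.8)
The plan is to prove the contrapositive after first trading ``genuinely entangled'' for ``entangled across every bipartition.'' Since each $|\Phi_k\rangle\in\mathcal{N}_N(d)\subset\mathcal{M}_N(d)$ has maximally mixed one-qudit reductions, linearity of the partial trace makes the mixture inherit $\rho_i=\mathbb{I}/d$ for every $i$, so $\text{tr}(\rho_i)^2=1/d$. Hence the hypothesis $\text{tr}(\rho_i)^2<\text{tr}\rho^2$ for all $i$ is nothing but $\text{tr}\rho^2>1/d$, and it suffices to show that whenever $\rho\in\tilde{\mathcal{N}}_{M,N}(d)$ is separable across \emph{some} bipartition $A|B$ one necessarily has $\text{tr}\rho^2\le 1/d$.

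The crucial structural input is the defining property of $\mathcal{N}_N(d)$: tracing out a single subsystem of any $|\Phi_k\rangle$ leaves a fully separable state. First I would upgrade this to the statement that, for any nonempty $B$, $\text{tr}_B\denmat{\Phi_k}$ is fully separable, by writing $B=\{i\}\cup B'$ and using that a partial trace of a fully separable state stays fully separable. Taking the convex combination over $k$ then shows that the reduced state $\rho_A=\text{tr}_B\rho$ is itself fully separable on the subsystems in $A$, for \emph{every} bipartition $A|B$. This is precisely the feature that fails for the Section~2 counterexample $|\psi\rangle_{12}\otimes|\psi\rangle_{34}$, and is exactly what membership in $\mathcal{N}_N(d)$ buys us.

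With these two facts the purity estimate becomes a two-step application of Proposition~1. Assuming $\rho$ is separable across $A|B$, Proposition~1 read for the bipartite split into the two parties $A$ and $B$ gives $\text{tr}\rho^2\le\text{tr}(\rho_A)^2$. Since $\rho_A$ is fully separable on $A$, Proposition~1 applies a second time \emph{within} $A$ to give $\text{tr}(\rho_A)^2\le\text{tr}((\rho_A)_i)^2$ for any single qudit $i\in A$; but $(\rho_A)_i=\rho_i=\mathbb{I}/d$, so $\text{tr}((\rho_A)_i)^2=1/d$. Chaining the two inequalities yields $\text{tr}\rho^2\le 1/d$, contradicting $\text{tr}\rho^2>1/d$. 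Therefore no such bipartition can exist, $\rho$ is entangled across every cut, and it is genuinely entangled.

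The step I expect to be the main obstacle is pinning down exactly which notion of biseparability is being excluded. The argument above cleanly disposes of separability with respect to any single fixed bipartition; if instead one adopts the stronger definition in which a biseparable state may be a convex mixture of states separable across \emph{different} bipartitions, the single-cut purity inequality no longer applies term by term, and the convexity of $\text{tr}\rho^2$ drives its maximum toward the extreme (pure) biseparable points. Closing that gap requires either restricting attention to the fixed-bipartition reading used elsewhere in the text, or supplementing the above purity bound with the inherited full separability of every $\rho_A$ in order to control the cross terms between the differently partitioned components.
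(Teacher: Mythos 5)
Your proof is correct and follows essentially the same route as the paper's: both arguments hinge on the observation that every reduced state $\rho_A$ inherits (full) separability from the defining property of $\mathcal{N}_N(d)$ together with linearity of the partial trace, then invoke Proposition 1 to bound $\text{tr}(\rho_A)^2$ by the one-qudit purity $1/d$, and conclude entanglement across every bipartition via the purity criterion --- you merely phrase as a contrapositive-plus-contradiction what the paper phrases as a direct application of Proposition 2. The caveat in your closing paragraph is well taken but applies equally to the paper's own proof, which likewise only rules out separability across each fixed cut, consistent with the fixed-partition reading of $k$-separability given in Section 2 of the paper.
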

\begin{proof}
To prove the theorem above, we show that all possible bipartitions of the state $\rho$ remains entangled. In particular, we show that $\rho_j$ is entangled with $\rho_{(j)}$, $\rho_{ij}$ is entangled with $\rho_{(ij)}$, $\rho_{ijk}$ is entangled with $\rho_{(ijk)}$ and so on, where $\rho_{(i)} = \mbox{tr}_{\mathcal{H}_i}\rho$ which means that the $i^\text{th}$ subsystem has been traced out. To show this, note that $\rho_{(j)} = \mbox{tr}_{\mathcal{H}_j}\rho = \sum_k\lambda_k \mbox{tr}_{\mathcal{H}_j}\denmat{\Phi_k}$. However, according to Eq.~(\ref{eq:2separable}), $\mbox{tr}_{\mathcal{H}_j}\denmat{\Phi_k}$ is a separable state. Then $\mbox{tr}~(\rho_{(j)})^2 \leq \mbox{tr}~(\rho_i)^2$ via Proposition 1 which implies that $\mbox{tr}~(\rho_{(j)})^2 < \mbox{tr}~\rho^2$. By Proposition 2, this means that $\rho_{(j)}$ is entangled with $\rho_j$. Similarly, $\rho_{(jk)} = \sum_{k'}\lambda_{k'} \mbox{tr}_{\mathcal{H}/\mathcal{H}_j,\mathcal{H}_k}\denmat{\Phi_{k'}}$. Again, separability of $\mbox{tr}_{\mathcal{H}/\mathcal{H}_j,\mathcal{H}_k}\denmat{\Phi_k}$ implies that $\mbox{tr}~(\rho_{(jk)})^2 \leq \mbox{tr}~(\rho_i)^2 < \mbox{tr}~\rho^2$, which means that the subsystem $\rho_{(jk)}$ is entangled with $\rho_{jk}$. We can do the same procedure to show that all the rest of the possible bipartitions ($3$ and $N-3$, $4$ and $N-4$ and so on back to $N-2$ and $2$) remain entangled. Hence, the entanglement of $\rho$ is genuine.
\end{proof}

Motivated by the above theorem, we define $\widehat{\mathcal{N}}_{M,N}(d)$ as the class of states that belong to $\tilde{\mathcal{N}}_{M,N}(d)$ as well as having the property that $\text{tr}~(\rho_i)^2 < \text{tr}~\rho^2$ for all $i$. Thus, if we want the class of entangled mixed states constructed in Section 2 via Theorems 1 and 2 to have genuine entanglement, then we restrict ourselves to the class of entangled mixed states belonging to $\widehat{\mathcal{N}}_{M,N}(d)$, i.e., the mixed state $\rho(\vec{\lambda}_M,\vec{\Phi})=\sum_{k=1}^M \lambda_k \denmat{\Phi_k}$ such that $\text{tr}~(\rho_i)^2 < \text{tr}~\rho^2$ for all $i$ and $|\Phi_k\rangle\in\mathcal{N}_N(d)$. In other words, $|\Phi_k\rangle$ should satisfy both Eqs.~(\ref{eq:maximal}) and (\ref{eq:2separable}).

As explicit examples, consider the mixture
\begin{align}
\label{eq:rhoA}
\rho^A = \frac{3}{4}|GHZ_4^+\rangle\langle GHZ_4^+| + \frac{1}{4}|GHZ_4^-\rangle\langle GHZ_4^-|.
\end{align}
Here, $\mbox{tr}~(\rho^A)^2 = 0.625$ and $\mbox{tr}~(\rho_i^A)^2 = 0.5$ for all $i$. Hence, $\text{tr}~(\rho_i)^2 < \text{tr}~\rho^2$ for all $i$. Also, $|GHZ_4^\pm\rangle\in\mathcal{N}_N(d)$, hence, $\rho^A$ is entangled according to Theorem 2 and its entanglement is genuine. On the other hand, the mixture
\begin{align}
\label{eq:rhoB}
\rho^B = \frac{3}{4}|GHZ_4^+\rangle\langle GHZ_4^+| + \frac{1}{4}|D_2^4\rangle\langle D_2^4|
\end{align}
also has $\mbox{tr}~(\rho^B)^2 = 0.625$ and $\mbox{tr}~(\rho_i^B)^2 = 0.5$ for all $i$. However, $|D_2^4\rangle\in\mathcal{M}_N(d)$ but $|D_2^4\rangle\notin\mathcal{N}_N(d)$, hence, $\rho^B$ is entangled according to Theorem 2 but its entanglement \textit{may not} be genuine. We emphasize that we are not claiming that the entanglement is not genuine. Rather, we are saying that the criterion is not enough to judge whether the entanglement it detects for states not in $\widehat{\mathcal{N}}_N(d)$ are genuine or not.

In general, Dicke states of the form $|D_{N/2}^N\rangle$, like Eq.~(\ref{eq:42Dicke}) for $N = 4$ belong to $\mathcal{M}_N(d)/\mathcal{N}_N(d)$. To show this, recall that for $N$-qubits with a parameter $m$ between 1 to $n - 1$, a Dicke state is generally written as
\begin{equation}
\label{eq:Dicke}
|D_m^N\rangle = \begin{pmatrix} N\\m\end{pmatrix}^{-\frac{1}{2}}\sum_{|\{\alpha\}|=m}|d_{\{\alpha\}}\rangle
\end{equation}
where the sum runs over all $\{\alpha\}\subset\{1,2,\dots,N\}$ which are sets of $m$ different integers between 1 and $N$ and $|d_{\{\alpha\}}\rangle$ is the product state with $|1\rangle$ in all subsystems and whose numbers are contained in $\{\alpha\}$ and $|0\rangle$ otherwise. Then tracing out all the subsystems except one gives us a state
\begin{equation}
\label{eq:mixedDicke}
\rho_i = \frac{1}{\Omega}\left(\begin{pmatrix} N-1\\m\end{pmatrix}|0\rangle\langle 0| + \begin{pmatrix} N-1\\m-1\end{pmatrix}|1\rangle\langle 1|\right)
\end{equation}
where $\Omega = \begin{pmatrix} N-1\\m\end{pmatrix} + \begin{pmatrix} N-1\\m-1\end{pmatrix}$. For Eq.~(\ref{eq:mixedDicke}) to be maximally mixed, we want $\begin{pmatrix} N-1\\m\end{pmatrix} = \begin{pmatrix} N-1\\m-1\end{pmatrix}$. This implies that $m = \frac{N}{2}$. Now, for Dicke states $\rho = |D_{N/2}^N\rangle\langle D_{N/2}^N|$, tracing out all but two of the subsystems leaves a state $\rho_{ij}$ which has a concurrence of $C(\rho_{ij}) = \frac{1}{N-1}$. Hence, Dicke states satisfy the property in Eq.~(\ref{eq:maximal}) but not Eq.~(\ref{eq:2separable}), in other words, $|D_{N/2}^N\rangle\in\mathcal{M}_N(d)$ but $|D_{N/2}^N\rangle\notin\mathcal{N}_N(d)$.

To summarize this section, if we are only interested in constructing entangled states without regard to whether or not the entanglement is genuine, then we can drop the property in Eq.~(\ref{eq:2separable}), giving us a larger set of pure states that we can mix from. However, if we want to ensure that the constructed mixed states contain genuine entanglement, then we do so at the expense of making the set of pure states we can mix from smaller, and this is done by adding the property in Eq.~(\ref{eq:2separable}).

\section{Consistency with concurrence}

In this section, we will discuss the consistency of our theorems in the previous section with the measure for quantifying bipartite entanglement known as the concurrence. In particular, we will construct states that are entangled according to our theorems and see if it is entangled under concurrence as well. We limit our comparisons to the case of bipartite qubits ($d = 2$ and $N = 2$) for two reasons. First, for the two-qubit case, the concurrence provides a computable formula for the entanglement of formation, a measure based on convex roof. Second, we still don't have a consensus as to what is the correct entanglement measure for states with arbitrary $d$ and $N$. In fact, many multipartite entanglement measure exists that define what it means to be maximally entangled in different ways \cite{vedral,wei, dur, love, gisin, brown, osterloh, gour, helwig}.

Theorem 1 is trivially satisfied by $d = 2$ since it would require us to mix $M = 1$ states out of the Bell states. In general, a mixed state comprised of the Bell states is given by
\begin{align}
\label{eq:mixed2}
\rho~=~&\alpha|\psi_+\rangle\langle\psi_+| + \beta|\psi_-\rangle\langle\psi_-| + \gamma|\phi_+\rangle\langle\phi_+| \nonumber \\ &+(1-\alpha-\beta-\gamma)|\phi_-\rangle\langle\phi_-|
\end{align}
where $\alpha + \beta + \gamma < 1$. Also, without loss of generality, we can assume that $\alpha>\beta>\gamma$. Note that the state has $d < M$ and so it does not fall under the condition required by Theorem 1. Are states not falling under Theorem 1 necessarily separable? We can readily calculate the concurrence of Eq.~(\ref{eq:mixed2}) giving us $C(\rho) = 1$. Thus, we see that there are entangled states which are not part of the class of entangled states given by Theorem 1 (i.e., those that have $d > M$) and these entangled states which are not detected by Theorem 1 automatically belong to that of type II. Note that if we let $\alpha = \beta = \gamma = \frac{1}{4}$, then we find that $C(\rho) = 0$.

Now, let us investigate Theorem 2 for bipartite qubits. Here, Theorem 2 requires that $d = M = 2$ and so in general, the mixed state will be given by
\begin{equation}
\label{eq:mixed=M}
\rho = \lambda|\phi_1\rangle\langle\phi_1| + (1-\lambda)|\phi_2\rangle\langle\phi_2|
\end{equation}
where $|\phi_1\rangle$ and $|\phi_2\rangle$ could be any of the distinct Bell states. Calculation of the concurrence of Eq.~(\ref{eq:mixed=M}) yields $C(\rho) = |2\lambda - 1|$ which is equal to zero only for $\lambda = \frac{1}{2}$. In other words, mixing $d$ number of entangled $d$-dimensional pure states that are not equally weighted automatically yields an entangled mixed state and this state belongs to that of type I. We conjecture that this holds for a general $N$-partite qudit systems although we can only check it for the bipartite qubit case since a generalized measure for an $N$-partite qudit mixed state (one that does not involve optimization) is not yet available.

\section{Conclusion}

We have considered mixtures of entangled pure states with maximally mixed one-qudit reduced density matrices and studied the conditions were they remain entangled using purity as our separability criterion. We have found that in order for the resulting mixed state to remain entangled, then the number of pure states with maximally mixed one-qudit reduced density matrices to be mixed must be less than or equal to its dimension. For the $d < M$ case, we found that there are entangled states which are not detected although these entangled states belong to type II which are ``undesirable'' in terms of their utility for the reduction of communication complexity or quantum cryptography. For the $d = M$ case, we found that the resulting mixed states are entangled provided that the entangled pure states with maximally mixed one-qudit reduced density matrices to be mixed are not equally weighted. We have shown that it is consistent with what is predicted by the concurrence for the case of bipartite qubits. However, such comparison can't be made for the general case of mixed $N$-partite qudits due to a lack of computable measure similar to concurrence. Also, we've shown that we can obtain genuinely entangled mixed states at the expense of restricting the set of entangled pure states that we can mix from.

However, there still exist open problems for future research that will be considered elsewhere like the physical significance of the relationship between $d$ and $M$ as well as an understanding of the $d > M$ case where the criterion fails. Also, it will be fruitful to look at much ``stronger'' separability criteria where the entropic-based ones fail. Another interesting case to consider would be the mixing of pure states of different dimensions and its effect on the entanglement of the resulting mixed state.

\end{document}